\DeclareFontFamily{U}{mathx}{\hyphenchar\font45}
\DeclareFontShape{U}{mathx}{m}{n}{
      <5> <6> <7> <8> <9> <10>
      <10.95> <12> <14.4> <17.28> <20.74> <24.88>
      mathx10
      }{}
\DeclareSymbolFont{mathx}{U}{mathx}{m}{n}
\DeclareMathAccent{\widecheck}      {0}{mathx}{"71}
\renewcommand{\email}[1]{\emailname: #1} 
\renewenvironment{proof}{\noindent{\itshape Proof.}}{\smartqed\qed}
\newcommand{\bbE}{{\mathbb{E}}}
\newcommand{\bbP}{{\mathbb{P}}}
\newcommand{\bbR}{{\mathbb{R}}}
\DeclareSymbolFont{bbold}{U}{bbold}{m}{n}
\DeclareSymbolFontAlphabet{\mathbbold}{bbold}
\newcommand{\ind}{{\mathbbold{1}}}
\newcommand{\ceil}[1]{\left\lceil #1 \right\rceil}    
\DeclareMathOperator{\var}{Var}
\DeclareSymbolFont{bbold}{U}{bbold}{m}{n}
\DeclareSymbolFontAlphabet{\mathbbold}{bbold}
  \providecommand*{\toclevel@author}{999}
  \providecommand*{\toclevel@title}{0}
\definecolor{shadecolor}{RGB}{248,248,248}
\newenvironment{Shaded}{\begin{snugshade}}{\end{snugshade}}
\newcommand{\KeywordTok}[1]{\textcolor[rgb]{0.13,0.29,0.53}{\textbf{{#1}}}}
\newcommand{\DataTypeTok}[1]{\textcolor[rgb]{0.13,0.29,0.53}{{#1}}}
\newcommand{\DecValTok}[1]{\textcolor[rgb]{0.00,0.00,0.81}{{#1}}}
\newcommand{\FloatTok}[1]{\textcolor[rgb]{0.00,0.00,0.81}{{#1}}}
\newcommand{\CharTok}[1]{\textcolor[rgb]{0.31,0.60,0.02}{{#1}}}
\newcommand{\StringTok}[1]{\textcolor[rgb]{0.31,0.60,0.02}{{#1}}}
\newcommand{\CommentTok}[1]{\textcolor[rgb]{0.56,0.35,0.01}{\textit{{#1}}}}
\newcommand{\NormalTok}[1]{{#1}}
\def\maxwidth{\ifdim\Gin@nat@width>\linewidth\linewidth\else\Gin@nat@width\fi}
\def\maxheight{\ifdim\Gin@nat@height>\textheight\textheight\else\Gin@nat@height\fi}
\begin{document}

\title*{Robust estimation of the mean with bounded relative standard deviation}
\author{Mark Huber}
\institute{
Mark Huber 
\at Claremont McKenna College, 850 Columbia Avenue, Claremont, CA, USA
\email{mhuber@cmc.edu}
}

\maketitle


\abstract{
Many randomized approximation algorithms operate by giving a procedure for simulating a random variable $X$ which has mean $\mu$ equal to the target answer, and a relative standard deviation bounded above by a known constant $c$.  Examples of this type of algorithm includes methods for approximating the number of satisfying assignments to 2-SAT or DNF, the volume of a convex body, and the partition function of a Gibbs distribution.  Because the answer is usually exponentially large in the problem input size, it is typical to require an estimate $\hat \mu$ satisfy $\bbP(|\hat \mu/\mu - 1| > \epsilon) \leq \delta$, where $\epsilon$ and $\delta$ are user specified nonnegative parameters.  The current best algorithm uses $\ceil{2c^2\epsilon^{-2}(1+\epsilon)^2 \ln(2/\delta)}$ samples to achieve such an estimate.  By modifying the algorithm in order to balance the tails, it is possible to improve this result to $\ceil{2(c^2\epsilon^{-2} + 1)/(1-\epsilon^2)\ln(2/\delta)}$ samples.  Aside from the theoretical improvement, we also consider how to best implement this algorithm in practice.  Numerical experiments show the behavior of the estimator on distributions where the relative standard deviation is unknown or infinite.
}


\section{Introduction}

Suppose we are interested in approximating a target value $\mu$.  Then many randomized approximation algorithms work by constructing a random variable $X$ such that $\bbE[X] = \mu$ and $\var(X)/\mu^2 \leq c^2$ for a known constant $c$.  The randomized algorithm then simulates $X_1,X_2,\ldots,X_n$ as independent identically distributed (iid) draws from $X$.  Finally, the values are input into a function to give an estimate $\hat \mu$ for $\mu$.

Examples of this type of algorithm include when $\mu$ is the number of solutions to a logic formula in Disjunctive Normal Form (DNF)~\cite{karpl1983}, the volume of a convex body~\cite{dyerfk1991}, and the partition function of a Gibbs distribution~\cite{huber2015a}.  For all of these problems, the random variable $X$ used is nonnegative with probability 1, and so we shall only consider this case for the rest of the paper.

For these types of problems, generating the samples from a high dimensional distribution is the most computationally intensive part of the algorithm.  Hence we will measure the running time of the algorithm by the number of samples from $X$ that must be generated.

The answer $\mu$ for these problems typically grows exponentially quickly in the size of the problem input size.  Therefore, it is usual to desire an approximation $\hat \mu$ that is accurate when measured by relative error.  We use $\epsilon > 0$ as our bound on the relative error, and $\delta > 0$ as the bound on the probability that the relative error restriction is violated.

\begin{definition}
Say $\hat \mu$ is an \emph{$(\epsilon,\delta)$-randomized approximation scheme} ($(\epsilon,\delta)$-ras) if 
\[
\bbP\left(\left|\frac{\hat \mu}{\mu} - 1\right| > \epsilon\right) \leq \delta.
\]
\end{definition}

This is equivalent to considering a loss function $L$ that is $L(\hat \mu,\mu) = \ind(|(\hat \mu/\mu)-1| > \epsilon)$, and requiring that the expected loss be no more than $\delta$.

Note that this question is slightly different than the problem most statistical estimates are designed to handle.  For instance, the classic work of~\cite{huber1964} is trying to minimize the asymptotic variance of the estimator, not determine how often the relative error is at most $\epsilon$.

The first approach to this problem is often the standard sample average
\[
S_n = \frac{X_1 + \cdots + X_n}{n}.
\]
For this estimator, only knowing that $\var(X) \leq c^2 \mu^2$, the best we can say about the probability that the relative error is large comes from Chebyshev's inequality~\cite{tchebichef1867}.
\[
\bbP(|S_n - \mu| > \epsilon \mu) \leq \frac{\var(X)}{n \epsilon^2 \mu^2} \leq c^2 \epsilon^{-2} n^{-1}.
\]

In particular, setting $n = c^2 \epsilon^{-2} \delta^{-1}$ gives an $(\epsilon,\delta)$-ras.  There are simple examples where Chebyshev's inequality is tight.  

While this bound works for all distributions, for most distributions in practice the probability of error will go down exponentially (and not just polynomially) in $n$.  We desire an estimate that matches this speed of convergence.

For example, suppose that $X$ is a normal random variable with mean $\mu$ and variance $c^2 \mu^2$ (write $X \sim \textsf{N}(\mu,c^2 \mu^2)$.  Then the sample average is normally distributed as well.  To be precise, $S_n \sim \textsf{N}(\mu,c^2\mu^2/n)$, and it is straightforward to show that 
\[
\bbP(|S_n - \mu| > \epsilon \mu) = \bbP(|Z| \geq \epsilon \sqrt{n} / c), 
\]
where $Z$ is a standard normal random variable.
This gives 
\begin{equation}
    \label{EQN:bound}
n = 2 c^2 \epsilon^{-2} [\ln(2/\delta) - o(\delta)].
\end{equation}
samples being necessary and sufficient to achieve an $(\epsilon,\delta)$-ras. 

We did this calculation for normally distributed samples, but in fact this gives a lower bound on the number of samples needed.  For any $X_1,\ldots,X_n$ with mean $\mu > 0$ and variance $c^2 > 0$, and an estimate $\hat \theta = \hat \theta(X_1,\ldots,X_n)$,
\[
\bbP(|\hat \theta - \mu| > \epsilon \mu) \geq (1/2) \bbP(|Z| \geq \epsilon \sqrt{n} / c).
\]
See, for instance, Proposition 6.1 of~\cite{catoni2012}.  Hence the minimum number of samples required for all instances is 
\[
n \geq 2 c^2 \epsilon^{-2}[\ln(1/\delta) - o(\delta)].
\]

The method presented in~\cite{hubertoappearb} comes close to this lower bound, requiring
\[
n = \lceil 2 c^2 \epsilon^{-2}(1 + \epsilon)^2 \ln(2/\delta) \rceil 
\]
samples.  So it is larger than optimal by a factor of $(1+\epsilon)^2$.  

Our main result is to modify the estimator slightly.  This has two beneficial effects.
\begin{enumerate}
  \item  The nuisance factor is reduced from first order in $\epsilon$ to second order.  To be precise, the new nuisance factor is $(1 + \epsilon^2/c^2) / (1 - \epsilon^2)$.
  \item It is possible to solve exactly for the value of the $M$-estimator (using square and cube roots) rather than through numerical appoximation if so desired.
\end{enumerate}

\begin{theorem}
\label{THM:main}
Suppose that $\bbP(X \geq 0) = 1$, $\bbE[X] = \mu$, and the standard deviation of $X$ is at most $c \mu$, where $c > 0$.  Then there exists an $(\epsilon,\delta)$-ras $\hat\mu_1$ where at most 
\[
\ceil{2 (c^2 \epsilon^{-2} + 1)(1-\epsilon^2)^{-1}\ln(2/\delta)}
\]
draws from $X$ are used.
\end{theorem}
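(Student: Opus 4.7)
The plan is to realize $\hat\mu_1$ as an $M$-estimator: the unique positive root of an equation of the form
\[
F_n(a) := \sum_{i=1}^n \psi(X_i, a) = 0,
\]
where $\psi(\cdot,a)$ is strictly decreasing in $a$, so that $\hat\mu_1$ is well-defined and the two tail events $\{\hat\mu_1 > (1+\epsilon)\mu\}$ and $\{\hat\mu_1 < (1-\epsilon)\mu\}$ translate by monotonicity into $\{F_n((1+\epsilon)\mu) > 0\}$ and $\{F_n((1-\epsilon)\mu) < 0\}$, which may be bounded separately and combined by a union bound against $\delta$. I would model $\psi$ on the function used in \cite{hubertoappearb}, but rescale the reference scale so that it sits symmetrically between $(1-\epsilon)\mu$ and $(1+\epsilon)\mu$ rather than at one endpoint; this is the "balancing the tails" step referred to in the abstract.

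Assuming such a $\psi$, the first step is to establish monotonicity and hence uniqueness of the root, and then to compute $\bbE[\psi(X,a)]$ together with a one-sided MGF bound
\[
\bbE\bigl[e^{\pm\lambda \psi(X,a)}\bigr] \leq \exp\!\bigl(\pm\lambda\,\bbE[\psi(X,a)] + \tfrac{1}{2}\lambda^2 V(a)\bigr),
\]
using only $\bbE[X] = \mu$, $\bbE[X^2] \leq (1+c^2)\mu^2$, and $X \geq 0$. The particular algebraic form of $\psi$ must be chosen so that both $\bbE[\psi(X,a)]$ and $V(a)$ evaluate cleanly at $a = (1\pm\epsilon)\mu$; the appearance of the $+1$ in the factor $(c^2\epsilon^{-2}+1)$ is what one expects once the variance contribution of the "balancing" term is added to the $c^2\mu^2$ bound on $\var(X)$.

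Next I would apply a Chernoff argument on each side,
\[
\bbP\bigl(F_n((1\pm\epsilon)\mu) \gtrless 0\bigr) \leq \bigl(\bbE\bigl[e^{\mp\lambda\psi(X,(1\pm\epsilon)\mu)}\bigr]\bigr)^n,
\]
optimize over $\lambda>0$ using the Bernstein-type bound above, and use the symmetric choice of $\psi$ to make the two optimized exponents agree. Because of the symmetric construction, the cross terms combine into $(1+\epsilon)(1-\epsilon) = 1-\epsilon^2$ in the denominator rather than the asymmetric $(1+\epsilon)^2$ one gets otherwise. Setting each tail equal to $\delta/2$ and solving for $n$ then produces the announced bound $\lceil 2(c^2\epsilon^{-2}+1)(1-\epsilon^2)^{-1}\ln(2/\delta)\rceil$ after routine algebra.

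The main obstacle is the design of $\psi$: I need $\psi(x,a)$ to be (i) monotone decreasing in $a$ with a computable, closed-form root in $a$ (so the second bullet in the preamble to the theorem --- solvability by square and cube roots --- holds), (ii) to have a tractable mean and a tractable upper bound on its MGF derived solely from nonnegativity of $X$ and the variance bound $\var(X)\leq c^2\mu^2$, and (iii) to be "centered" at the midpoint of $(1-\epsilon)\mu$ and $(1+\epsilon)\mu$ so that the two tail Chernoff bounds yield the same exponent. Verifying that a single rational or quadratic-in-$a$ choice of $\psi$ achieves all three is the delicate point; everything else reduces to Chernoff calculus and algebraic rearrangement.
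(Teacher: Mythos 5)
Your skeleton matches the paper's: realize the estimator as a root of a decreasing $\Psi$-function, convert the two failure events into sign conditions at $(1\pm\epsilon)\mu$, bound each by a Chernoff argument using only the first two moments, balance the two tails, and union-bound against $\delta$. The balancing step is also in the right spirit, although the mechanism in the paper is slightly different from what you describe: the influence function stays anchored at the candidate value $m$ (via $u = \lambda(x/m-1)$), and the two tail bounds become two \emph{different} upward parabolas in the scale parameter $\lambda$, namely $1 - \lambda\epsilon/(1\mp\epsilon) + \tfrac12\lambda^2(c^2+\epsilon^2)/(1\mp\epsilon)^2$; since they cannot be minimized simultaneously, $\lambda$ is chosen where they cross, $\lambda = \epsilon(1-\epsilon^2)/(c^2+\epsilon^2)$, which yields the common gap $\tfrac12\epsilon^2(1-\epsilon^2)/(c^2+\epsilon^2)$ and hence the stated sample count. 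The $+1$ in $(c^2\epsilon^{-2}+1)$ comes from $\bbE[(X/m-1)^2]$ being variance plus squared bias, contributing $c^2+\epsilon^2$ at $m=(1\pm\epsilon)\mu$, not from an extra variance term introduced by recentering.

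The genuine gap is the step you yourself flag as ``the delicate point'': the MGF bound $\bbE[e^{\pm\lambda\psi(X,a)}] \leq \exp(\pm\lambda\,\bbE[\psi] + \tfrac12\lambda^2 V(a))$ is simply false for a generic monotone $\psi$ when all you know about $X$ is nonnegativity and $\var(X)\le c^2\mu^2$ --- for the naive choice $\psi(x,a)=x/a-1$ this is the MGF of $X$ itself, which can be infinite. The entire argument hinges on constructing a bounded, Catoni-type influence function whose exponential is dominated \emph{pointwise} by a quadratic. The paper uses $d_h(u) = \tfrac56\ind(u>1) + (u-u^3/6)\ind(|u|\le 1) - \tfrac56\ind(u<-1)$ and proves $-\ln(1-u+u^2/2) \le d_h(u) \le \ln(1+u+u^2/2)$ for all $u$, so that $\bbE[e^{d_h(\lambda(X/m-1))}] \le 1 + \lambda\,\bbE[X/m-1] + \tfrac12\lambda^2\,\bbE[(X/m-1)^2]$, which is exactly what makes a two-moment Chernoff bound possible; the truncation at $\pm\tfrac56$ also makes the root computable by radicals. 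Without exhibiting such a $\psi$ and verifying the pointwise exponential domination, the Chernoff step in your outline does not go through, so the proposal is an accurate map of the proof but is missing its central construction. (A small additional slip: your signs $\mp\lambda$ are reversed relative to the events $F_n((1\pm\epsilon)\mu)\gtrless 0$.)
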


Ignoring the ceiling function, the new method uses a number of samples bounded by the old method times a factor of
\[
\frac{1 + \epsilon^2/c^2}{(1 + \epsilon)^2 (1 - \epsilon^2)}.
\]
For instance, when $\epsilon = 0.1$ and $c = 2$, this factor is $0.8556\ldots$, and so we obtain an improvement of over 14\% in the running time.  At first, this might seem slight, but remember that the best improvement we can hope to make based on normal random variables is $1/(1+\epsilon)^2 = 0.8264$, or a bit less than 18\%.  Therefore, this does not quite obtain the maximum improvement of $1/(1 + 2\epsilon + \epsilon^2)$, but does obtain a factor of $(1 + O(\epsilon^2))/(1 + 2\epsilon + \epsilon^2)$.

The remainder of this paper is organized as follows.  In Section~\ref{SEC:catoni}, we describe what $M$-estimators are, and show how to down weight samples that are far away from the mean.  Section~\ref{SEC:computation} shows how to find the estimator both approximately and exactly.  In Section~\ref{SEC:experiments}  we consider using these estimators on some small examples and see how they behave numerically.

\section{\texorpdfstring{$\Psi$}{Psi}-estimators}
\label{SEC:catoni}

The median is a robust centrality estimate, but it is easier to build a random variable whose mean is the target value.  P.J. Huber first showed in~\cite{huber1964} how to build an estimate that has the robust nature of the median estimate while still converging to the mean.

Consider a set of real numbers $x_1,\ldots,x_n$.
The sample average of the $\{x_i\}$ is the point $m$ where the sum of the distances to points to the right of $m$ equals the sum of the distances to points to the left of $m$.  

This idea can be generalized as follows.  First, begin with a function $\psi:\bbR \times \bbR^n \rightarrow \bbR$.  Second, using $\psi$, form the function
\[
\Psi(m) = \sum_{i=1}^n \psi(x_i, m).
\]
For convenience, we suppress the dependence of $\Psi$ on $(x_1,\ldots,x_n)$ in the notation.

Consider the set of zeros of $\Psi$.  These zeros form the set of \emph{$\Psi$-type $M$-estimators} for the center of the points $(x_1,\ldots,x_n)$.

For example, suppose our $\psi$ is $f:\bbR^2 \rightarrow \bbR$ defined as 
\[
f(x,m) = \left\{ \begin{array}{ll} (x/m) - 1 \hspace*{2em} & m \neq 0 \\
0 & x = m = 0.\end{array} \right.
\]
Then consider 
\[
\sum_{i = 1}^m f(x_i, m) = 0.
\]

For $x$ and $m$ with the same units, the right hand side is unitless.  If the $x_i$ are nonnegative $\{x_i\}$ and at least one $x_i$ is positive, 
\[
\sum_{i=1}^n f(x_i,m) = 0 \Leftrightarrow m = \frac{\sum_{i=1}^n x_i}{n}.
\]
That is, the unique $M$-estimator using $f$ as our $\psi$ function is the sample average.

Now consider 
\[
g(x,m) = \ind\left(f(x,m) > 0\right) - \ind\left(f(x,m) < 0\right).
\]
Then we wish to find $m$ such that
\[
\sum_{i=1}^n g(x_i,m) = 0.
\]

Summing $g(x_i, m)$ adds 1 when $x_i$ greater than $m$, and subtracts 1 when $x_i$ is $m$.  Then the sum of the $g(x_i,m)$ is zero exactly when there are an equal number of $x_i$ that are above and below $m$.  When the number of distinct $x_i$ values is odd, then $m$ has a unique solution equal to the sample median.

The sample median has the advantage of being robust to large changes in the $x_i$ values, but this will converge to the median of the distribution (under mild conditions) rather than the mean.  The sample average actually converges to the mean when applied to iid draws from $X$, and this is the value we care about finding.  However, the sample average can be badly thrown off by a single large outlier.  Our goal is to create a $\Psi$ function that combines the good qualities of each while avoiding the bad qualities.

Both $f$ and $g$ can be expressed using weighted differences.  Let $d_f:\bbR \rightarrow \bbR$ and $d_g:\bbR \rightarrow \bbR$ be defined as
\begin{align*}
    d_f(u) &= u \\
    d_g(u) &= \ind(u > 0) - \ind(u < 0).
\end{align*}
Let $u(x_i, m) = x_i / m - 1$ for $m \neq 0$, and $0$ for $x_i = m = 0$.  Then $f(x_i, m) = d_f(u)$ and $g(x_i, m) = d_g(u)$.

Catoni~\cite{catoni2012} improved upon this $\Psi$-estimator by using a function that approximated $d_f$ for $|u| \leq 1$, and approximated $d_u$ for $|u| > 1$, and could be analyzed using the Chernoff bound approach~\cite{chernoff1952}.   Catoni and Guillini then created an easier to use version of the $\Psi$-estimator in~\cite{catonig2017}.

The following is a modification of the Catoni and Guillini $\Psi$-estimator.  Unlike~\cite{catonig2017}, the weighted difference here does not have square roots in the constants, which makes them slightly easier to work with from a computational standpoint. \begin{align*}
    d_h(u) &= \left(\frac{5}{6}\right)\ind(u > 1) + \left(u - \frac{u^3}{6}\right)\ind(u \in [-1,1]) - \left(\frac{5}{6}\right) \ind(u < -1).
\end{align*}
Then define
\[
h(x_i, m) = d_h(u(x_i, m )).
\]

\begin{figure}[h!]
\begin{center}
\includegraphics[clip=true,trim=1.8in 5.4in 1.8in 2in,scale=0.7]{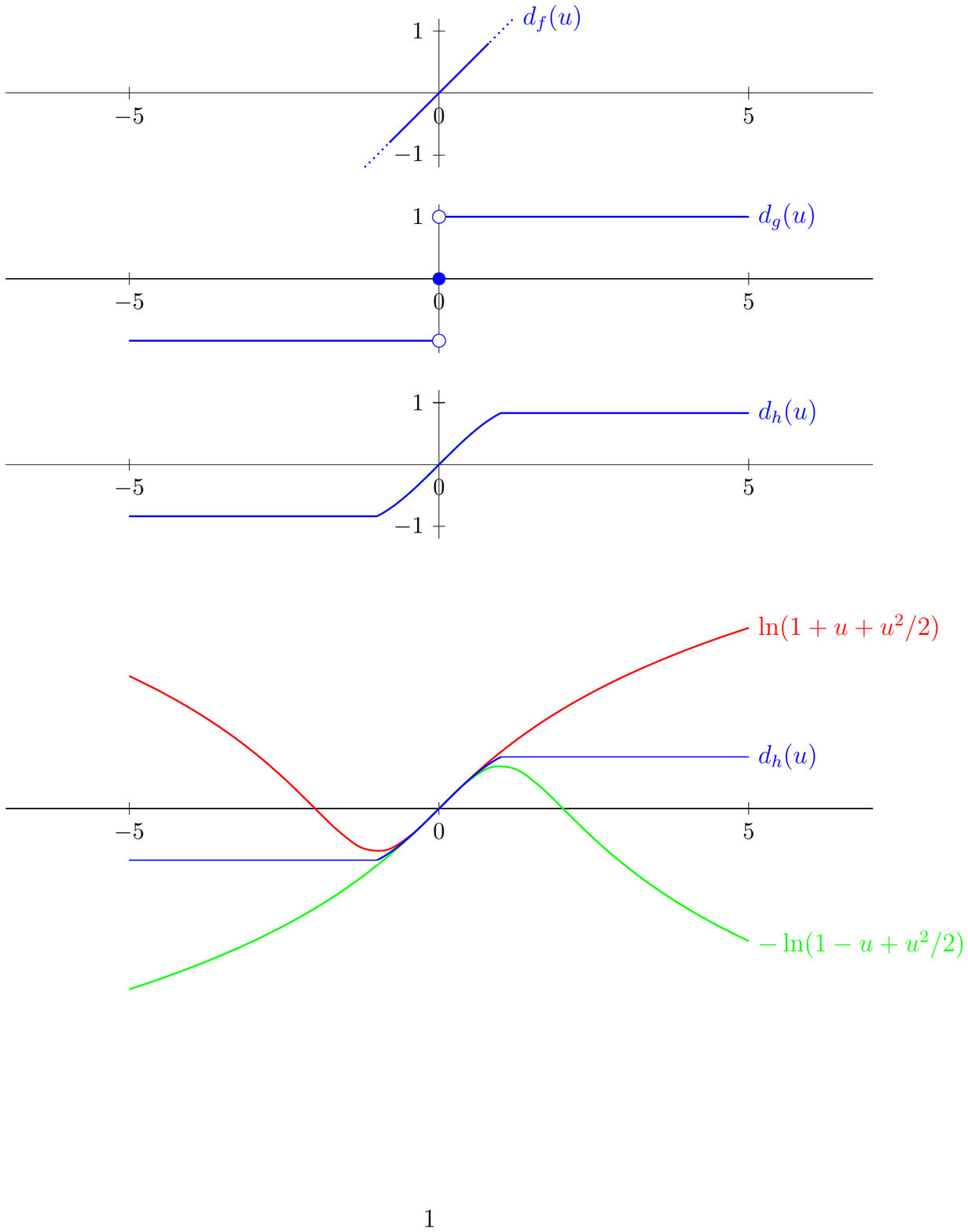}
\caption{The functions $d_f$, $d_g$, and $d_h$.}
\label{FIG:d3}
\end{center}
\end{figure}

The function $d_h$ behaves like the mean weight for values near 0, but like the median weights for values far away from 0.  See Figure~\ref{FIG:d3}.
In order to link this function to the first and second moments of the random variable, the following links $d_h$ to weighted differences introduced in~\cite{catoni2012}.  See Figure~\ref{FIG:bounds}.

\begin{lemma}
    Let 
    \begin{align*}
        d_L(u) &= -\ln(1-u+u^2/2) \\
        d_U(u) &= \ln(1+u+u^2/2).
    \end{align*}
    Then for all $u \in \bbR$,
    \[
    d_L(u) \leq d_h(u) \leq d_U(u)
    \]
\end{lemma}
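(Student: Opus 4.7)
The plan is to exploit the odd symmetry of $d_h$ to reduce the two-sided inequality to a single one-sided bound, and then verify that bound through a case analysis matching the three pieces in the definition of $d_h$.

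First, I would observe that $d_h$ is an odd function (straightforward from each piece of its definition, with the jumps $\pm 5/6$ lining up with the endpoint values of the middle piece $u - u^3/6$), while $d_U(-u) = \ln(1 - u + u^2/2) = -d_L(u)$. Hence once the upper bound $d_h(v) \leq d_U(v)$ is established for every real $v$, substituting $v = -u$ yields $-d_h(u) \leq -d_L(u)$, i.e.\ the lower bound $d_L(u) \leq d_h(u)$. So only the upper bound needs to be proved.

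For the outer pieces of $d_h$, the proof reduces to monotonicity plus a numerical check. For $u > 1$, the derivative of $d_U$ equals $(1+u)/(1+u+u^2/2)$, which is positive on $(-1,\infty)$, so $d_U(u) \geq d_U(1) = \ln(5/2) > 5/6 = d_h(u)$. For $u < -1$, completing the square gives $1 + u + u^2/2 = \tfrac12[(u+1)^2 + 1] \geq \tfrac12$, whence $d_U(u) \geq -\ln 2 > -5/6 = d_h(u)$.

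The central piece $u \in [-1,1]$ is the heart of the argument, where I must show that $\phi(u) := \ln(1 + u + u^2/2) - u + u^3/6 \geq 0$. After combining terms over a common denominator, the derivative simplifies to
\[
\phi'(u) = \frac{u^3(1 + u/2)}{2(1 + u + u^2/2)}.
\]
On $[-1,1]$ the denominator is positive and $1 + u/2 > 0$, so $\phi'$ has the sign of $u^3$. Thus $\phi$ decreases on $[-1,0]$, increases on $[0,1]$, and attains its minimum value $\phi(0) = 0$, so $\phi \geq 0$ throughout the interval. The main obstacle is recognizing this clean factorization of $\phi'$; once it is in hand the sign analysis is automatic, and the symmetry observation cuts the remaining work in half.
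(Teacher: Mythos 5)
Your proof is correct and follows essentially the same route as the paper's: reduce to the upper bound via the odd symmetry $d_L(u) = -d_U(-u)$, dispatch the outer pieces with the sign of $d_U'$ and the numerical checks $\ln(5/2) > 5/6$ and $-\ln 2 > -5/6$, and handle $[-1,1]$ by showing the difference has its minimum (equal to $0$) at $u=0$. Your explicit factorization $\phi'(u) = u^3(1+u/2)/(2(1+u+u^2/2))$ is the same computation the paper uses to locate the unique critical point, just carried one step further to read off the sign directly.
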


\begin{proof}
    Note $d'_U(u) = (1+u)/(1+u+u^2/2)$.  This derivative is positive over $[1,\infty)$ so $d_U$ is increasing in this region.  Since $d_U(1) = \ln(2.5) > 5/6 = d_h(1)$, $d_U(u) \geq d_h(u)$ over $[1,\infty)$.  
    
    Similarly, $d'_U(u) < 0$ for $u \in (-\infty,-1]$, and so it is decreasing in this region, and $d_U(-1) = \ln(1/2) \geq -5/6 = d_h(u)$ so $d_U(u) \geq d_h(u)$ in $(-\infty,1]$.  
    Finally, inside $[-1,1]$, $d_h(u) = u-u^3/6$.  So a minimum of $d_U(u) - d_h(u)$ occurs either at $-1$, $1$, or a critical point where $d'_U(u) - (1-u^2/2) = 0$.  The unique critical point in $[-1,1]$ is at $u = 0$, and $d_U(0) = d_h(0)$.  This means $d_U(u) - d_h(u) \geq 0$ for all $u \in [-1,1]$.
    
    The lower bound follows from $d_L(u) = -d_U(-u) \leq -d_h(-u) = d_h(u)$.
\end{proof}

\begin{figure}[h!]
\begin{center}
\includegraphics[clip=true,trim=1.8in 2.8in 0.7in 5.7in,scale=0.7]{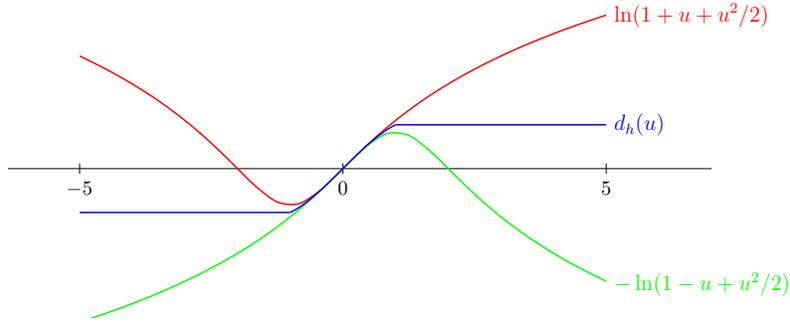}
\caption{Lower and upper bounds for $d_h$.}
\label{FIG:bounds}
\end{center}
\end{figure}

It helps to introduce a scale factor $\lambda$ that allows us to extend the effective range where $d(u) \approx u$.  Let
\[
h_\lambda(x_i,m) = \lambda^{-1} d_h(\lambda(x_i/m - 1)).
\]
Then $\lambda > 0$ is a parameter that can be chosen by the user ahead of running the algorithm based upon $\epsilon$ and $c$.

The following was shown as Lemma 13 of~\cite{hubertoappearb}.
\begin{lemma}
For given $\epsilon, \delta > 0$, let $\epsilon' = \epsilon/(1+\epsilon)$ and $\lambda = \epsilon'/c^2$.  Set 
\[
n = \ceil{2c^2\epsilon^{-2}\ln(2/\delta)(1+\epsilon)^2}.
\]
For $X_1,\ldots,X_n$ iid $X$, form the function
\[
\Psi_\lambda(m) = \frac{1}{n} \sum_{i=1}^n \lambda^{-1} d_h(\lambda u(X_i, m)).
\]
Let $\hat \mu$ be any value of $m$ such that $|\Psi_\lambda(m)| \leq (\epsilon')^2/2.$  Then $\hat \mu$ is an $(\epsilon,\delta)$-ras for $\mu$. 
\end{lemma}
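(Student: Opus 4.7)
The plan is to apply a Chernoff-style exponential moment argument to $\Psi_\lambda(m)$ at the two boundary points $m_- = (1-\epsilon)\mu$ and $m_+ = (1+\epsilon)\mu$, and then to leverage a monotonicity property of $\Psi_\lambda$ in $m$ to turn these pointwise bounds into a statement about every admissible $\hat\mu$ at once.

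First I would note that $d_h$ is non-decreasing on $\bbR$ (its derivative is $1-u^2/2 \geq 1/2$ on $[-1,1]$ and it is constant outside), and $u(x,m) = x/m-1$ is non-increasing in $m$ for $x \geq 0$, so the random function $m \mapsto \Psi_\lambda(m)$ is non-increasing. Consequently, on the joint event
\[
\{\Psi_\lambda(m_+) < -(\epsilon')^2/2\} \cap \{\Psi_\lambda(m_-) > (\epsilon')^2/2\},
\]
no $m$ outside $(m_-, m_+)$ can satisfy the tolerance $|\Psi_\lambda(m)| \leq (\epsilon')^2/2$, and therefore every valid $\hat\mu$ must obey $|\hat\mu/\mu - 1| < \epsilon$. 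It is thus enough to bound the probability of each of the two failure events by $\delta/2$.

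Next I would invoke the preceding lemma in exponentiated form: since $d_h(u) \leq d_U(u) = \ln(1+u+u^2/2)$ and $-d_h(u) \leq -d_L(u) = \ln(1-u+u^2/2)$, we have $\bbE[\exp(\pm d_h(\lambda(X_i/m-1)))] \leq 1 \pm \lambda(\mu/m-1) + (\lambda^2/2)\bbE[(X_i/m-1)^2]$. Using $\bbE[(X_i/m-1)^2] \leq c^2\mu^2/m^2 + (\mu/m-1)^2$, independence of the $X_i$, and $1+x \leq e^x$, an exponential Markov inequality then yields a right-tail bound of the form
\[
\bbP(\Psi_\lambda(m) \geq t) \leq \exp\!\Bigl(n\bigl[-\lambda t + \lambda(\mu/m-1) + \tfrac{\lambda^2}{2}\bigl(c^2(\mu/m)^2 + (\mu/m-1)^2\bigr)\bigr]\Bigr),
\]
together with a symmetric left-tail bound in which the sign of $\lambda(\mu/m-1)$ is flipped.

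The final step is to plug in $m = m_\pm$, $t = \pm(\epsilon')^2/2$, and $\lambda = \epsilon'/c^2$, and verify that each exponent is at most $-\ln(2/\delta)$ once $n = \lceil 2c^2\epsilon^{-2}(1+\epsilon)^2 \ln(2/\delta) \rceil$. The choice $\epsilon' = \epsilon/(1+\epsilon)$ was made precisely so that $\mu/m_+ - 1 = -\epsilon'$, which keeps the right-tail calculation at $m_+$ clean. The main obstacle is the left-tail calculation at $m_-$, where $\mu/m_- - 1 = \epsilon/(1-\epsilon)$ is strictly larger in magnitude than $\epsilon'$; it is this asymmetry that forces the extra $(1+\epsilon)^2$ inflation over the Chebyshev-optimal sample count, and it is exactly the inefficiency that the main theorem of this paper removes by re-centering the two tails.
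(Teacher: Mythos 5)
Your skeleton is the right one, and it is essentially the machinery this paper itself deploys: the paper does not actually prove this lemma (it imports it as Lemma 13 of~\cite{hubertoappearb}), but its Lemmas~3--7 for the improved Theorem~1 run exactly your argument --- monotonicity of $\Psi_\lambda$ reduces everything to the two boundary points $(1\pm\epsilon)\mu$, Markov's inequality applied to $\exp(\pm\sum_i d_h)$ together with $d_L \le d_h \le d_U$ gives per-sample factors $1 \pm \lambda\,\bbE[u] + \tfrac{\lambda^2}{2}\bbE[u^2]$, and your moment bounds are verbatim the paper's Lemma~4. So the decomposition, the key inequality, and the tail bounds are all correct.

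The gap is that you stop exactly where the content of the lemma lives. ``Verify that each exponent is at most $-\ln(2/\delta)$'' is the whole point of the specific constants $\epsilon'=\epsilon/(1+\epsilon)$, $\lambda=\epsilon'/c^2$, and the tolerance $(\epsilon')^2/2$, and that verification is tight rather than routine. Carrying it out at $m_+$: the per-sample exponent is $\tfrac{(\epsilon')^2}{c^2}\bigl[\tfrac{\epsilon'}{2} - 1 + \tfrac{1}{2(1+\epsilon)^2} + \tfrac{(\epsilon')^2}{2c^2}\bigr]$, and with $n = 2c^2(\epsilon')^{-2}\ln(2/\delta)$ you need the bracket to be at most $-1/2$; it equals $-\tfrac12 - \tfrac{\epsilon}{2(1+\epsilon)^2}\bigl(1 - \tfrac{\epsilon}{c^2}\bigr)$, so the claim closes only under the implicit assumption $\epsilon \le c^2$, and only because the tolerance term $\lambda(\epsilon')^2/2$ consumes part but not all of the slack. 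Your closing diagnosis is also backwards: the left tail at $m_-$ is the \emph{easy} side, since the drift $\mu/m_- - 1 = \epsilon/(1-\epsilon) > \epsilon'$ enters with a favorable sign and more than compensates for the larger variance factor $1/(1-\epsilon)^2$ (its bracket is $-\tfrac12 - \tfrac{\epsilon}{2} - \epsilon^2(1 - \tfrac{1}{2c^2}) + O(\epsilon^3)$, versus $-\tfrac12-\tfrac{\epsilon}{2}+\epsilon^2(1+\tfrac{1}{2c^2})+O(\epsilon^3)$ on the right). The $(1+\epsilon)^2$ inflation is already incurred by the right tail alone, because the gap there is governed by $(\epsilon')^2 = \epsilon^2/(1+\epsilon)^2$; the improvement in Theorem~1 comes from choosing $\lambda = \epsilon(1-\epsilon^2)/(c^2+\epsilon^2)$ to equalize the two exact quadratic upper bounds (Lemmas~5--7), not from repairing an asymmetry at $m_-$.
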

To improve upon this result and obtain Theorem~\ref{THM:main}, we must be more careful about our choice of $\lambda$.

To meet the requirement of an $(\epsilon,\delta)$-ras, we must show that $\Psi_{\lambda}(m)$ has all its zeros in the interval $((1-\epsilon)\mu,(1+\epsilon)\mu)$ with probability at least $1 - \delta$.  Given that $\Psi_\lambda$ is continuous and decreasing, it suffices to show that $\Psi_{\lambda}((1-\epsilon)\mu) > 0$ and $\Psi_{\lambda}((1+\epsilon)\mu) < 0$.

\begin{lemma}
\label{LEM:chernoff}
Let $X_1,\ldots,X_n$ be iid $X$.  Then for $\lambda > 0$,
\begin{align*}
\bbP(\Psi_{\lambda}((1+\epsilon)\mu) \geq 0) &\leq \left[1 + \bbE[u_{\epsilon}] + \frac{1}{2}\bbE[u^2_{\epsilon}]\right]^n, \\
\bbP(\Psi_{\lambda}((1-\epsilon)\mu) \leq 0)  &\leq \left[1 - \bbE[u_{-\epsilon}] + \frac{1}{2}\bbE[u^2_{-\epsilon}]\right]^n.
\end{align*}
where
\[
u_{\epsilon} = \frac{X}{(1+\epsilon)\mu} - 1.
\]
\end{lemma}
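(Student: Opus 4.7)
The plan is to apply the standard Chernoff (exponential Markov) method to the sum $S = \sum_{i=1}^n d_h(\lambda u(X_i, m))$ at $m = (1+\epsilon)\mu$ and $m = (1-\epsilon)\mu$ respectively, using the envelope inequalities $d_L \leq d_h \leq d_U$ from the preceding lemma to convert the resulting exponential moment into an elementary second-moment polynomial. This replaces a $d_h$-sum, which has no natural moment-generating structure, with a $\log(1 + v + v^2/2)$-sum whose exponential is a quadratic in $v$.

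For the upper tail, first observe that when $m = (1+\epsilon)\mu$ the variables $u(X_i,m)$ are iid copies of $u_\epsilon = X/((1+\epsilon)\mu) - 1$, so $\{\Psi_\lambda((1+\epsilon)\mu) \geq 0\}$ coincides with $\{S \geq 0\}$ (the constant factor $(n\lambda)^{-1}$ is positive and drops out). Next apply Markov's inequality to $e^S \geq 1$ and factor by independence to obtain $\bbP(S \geq 0) \leq \bbE[\exp(d_h(\lambda u_\epsilon))]^n$. Finally invoke the upper envelope $d_h(v) \leq d_U(v) = \ln(1+v+v^2/2)$ at $v = \lambda u_\epsilon$, which gives $\exp(d_h(\lambda u_\epsilon)) \leq 1 + \lambda u_\epsilon + \tfrac{1}{2}\lambda^2 u_\epsilon^2$, and take expectations to reach a bound of the form in the statement (the $\lambda$ factors riding on $\bbE[u_\epsilon]$ and $\bbE[u_\epsilon^2]$ are what the Chernoff argument naturally produces).

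For the lower tail, the argument is a mirror image: rewrite $\{\Psi_\lambda((1-\epsilon)\mu) \leq 0\}$ as $\{-\sum_i d_h(\lambda u_{-\epsilon,i}) \geq 0\}$ and use the companion inequality $-d_h(v) \leq -d_L(v) = \ln(1 - v + v^2/2)$, which the preceding lemma delivers via the symmetry $d_L(v) = -d_U(-v)$. Exponentiating, applying Markov's inequality, and using independence of the iid copies of $u_{-\epsilon}$ then yields the second bound. I do not anticipate any substantial obstacle: the preceding lemma has already performed the analytic heavy lifting, and the only housekeeping requirement is that $\bbE[u_{\pm\epsilon}^2]$ be finite, which is immediate from $\var(X) \leq c^2\mu^2$ and $\bbE[X] = \mu$. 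The only place where one must be a little careful is keeping track of the sign in the lower-tail step so that the $d_L$ (rather than $d_U$) envelope is invoked.
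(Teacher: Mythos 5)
Your proposal is correct and follows essentially the same route as the paper: Markov's inequality applied to the exponential of the (sign-equivalent) sum $\sum_i d_h(\lambda u(X_i,m))$, factorization by independence, and the envelope bounds $d_h \leq d_U$ and $-d_h \leq -d_L$ to replace $\exp(\pm d_h(v))$ by $1 \pm v + v^2/2$ before taking expectations. You even correctly flag the one notational wrinkle (the $\lambda$ factors that the argument attaches to $\bbE[u_\epsilon]$ and $\bbE[u_\epsilon^2]$), which the paper's own statement glosses over and only restores in the following lemma where $u$ is redefined to include $\lambda$.
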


\begin{proof}
    Note that 
    \begin{align*}
        \bbP(\Psi_{\lambda}((1+\epsilon)\mu) \geq 0) &= \bbP(\exp(\lambda \Psi_{\lambda}((1+\epsilon)\mu) \geq 1) \\
        &\leq \bbE[\exp(\lambda \Psi_{\lambda}((1+\epsilon)\mu))]
    \end{align*}
    by Markov's inequality.  Note
    \[
    \exp(\lambda \Psi_{\lambda}(m)) = \prod_{i=1}^n \exp(\lambda d_h(X_i/m-1)).
    \]
    Each term in the product is independent, therefore the mean of the product is the product of the means (which are identical.)
    
    Let $u = \lambda(X/m - 1)$.  Then
    \[
    \bbP(\Psi_{\lambda}((1+\epsilon)\mu) \geq 0) \leq [\bbE(\exp(\lambda d_h(u)))]^n
    \]
    
    From the previous lemma,
    \begin{align*}
        \exp(d_h(u)) &\leq \exp(\ln(1+u+u^2/2)) \\
        &= 1 + u + u^2/2,
    \end{align*}
    Hence
    \[
    \bbE(\exp(d_h(u))) = 1 + \bbE(u) + \bbE(u^2/2).
    \]
    Putting $m = (1+\epsilon)\mu$ into this expression then gives the first inequality.  
    
    For the second inequality, the steps are nearly identical, but we begin by multiplying by $-\lambda$.  This completes the proof.
\end{proof}

In particular, if we choose $n$ so that $\bbP(\Psi_\lambda((1+\epsilon)\mu) > 0) \leq \delta/2$, and $\bbP(\Psi_\lambda((1-\epsilon)\mu) < 0) \leq \delta/2$, then by the union bound the probability that $\Psi$ has a root in $[(1-\epsilon)\mu,(1+\epsilon)\mu]$ is at least $\delta/2 + \delta/2=\delta$.

As is well known, for $g > 0$,
\[
(1-g)^n \leq \exp(-gn),
\]
and so if $n \geq (1/g)\ln(2/\delta)$, 
\[
(1-g)^n \leq \delta/2.
\]
We refer to $g$ as the \emph{gap}.  Since the number of samples needed is inversely proportional to the gap, we wish the gap to be as large as possible.
We can lower bound the gap given by the previous lemma in terms of $\lambda$ and $m$.
\begin{lemma}
\label{LEM:bound}
  For $u = \lambda(X/m-1)$, and
  $a(m) = 1 - (m/\mu)$, 
  \begin{equation}
      \label{EQN:plusside}
  1 + \bbE[u] + \bbE[u^2/2] \leq 
  1 + \frac{\lambda \mu}{m}a(m) + \frac{1}{2}\left(\frac{\lambda \mu}{m}\right)^2\left[c^2 + a(m)^2\right].
  \end{equation}
    Similarly,
    \begin{equation}
        \label{EQN:negside}
    1 + \bbE[u] + \bbE[u^2/2] \leq 
    1 - \frac{\lambda \mu}{m}a(m) + \frac{1}{2}\left(\frac{\lambda \mu}{m}\right)^2\left[c^2 + a(m)^2\right].
    \end{equation}    
\end{lemma}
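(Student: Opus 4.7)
The plan is a direct moment calculation. Since $u = \lambda(X/m-1) = (\lambda/m)(X-m)$ is an affine function of $X$, I can compute $\bbE[u]$ exactly and bound $\bbE[u^2]$ using only the hypotheses $\bbE[X] = \mu$ and $\var(X) \leq c^2\mu^2$, then substitute into the left-hand side of each displayed inequality.

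For the first moment, linearity immediately gives $\bbE[u] = \lambda(\mu/m - 1) = (\lambda\mu/m)(1 - m/\mu) = (\lambda\mu/m)\, a(m)$, which is an exact identity rather than just a bound. For the second moment, I would write $X - m = (X-\mu) + (\mu - m)$, square, and take expectation; the cross term $2(\mu-m)\bbE[X-\mu]$ vanishes by the definition of $\mu$, leaving
\[
\bbE[(X-m)^2] = \var(X) + (\mu-m)^2 \leq c^2\mu^2 + (\mu-m)^2.
\]
Multiplying by $(\lambda/m)^2$ and factoring out $(\lambda\mu/m)^2$ produces
\[
\bbE[u^2] \leq \left(\frac{\lambda\mu}{m}\right)^2\bigl(c^2 + a(m)^2\bigr).
\]

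Substituting these two facts into $1 + \bbE[u] + \bbE[u^2/2]$ reproduces the right-hand side of (\ref{EQN:plusside}) verbatim. For (\ref{EQN:negside}), I would repeat the argument with $u$ replaced by $-u$; this is precisely the version needed in the proof of Lemma~\ref{LEM:chernoff} for the lower-tail probability $\bbP(\Psi_\lambda((1-\epsilon)\mu) \leq 0)$, where Markov is applied to $\exp(-\lambda \Psi_\lambda(m))$ and the opposite sign propagates through. Replacing $u$ with $-u$ flips the first moment to $-(\lambda\mu/m)\,a(m)$ while leaving $\bbE[u^2]$ unchanged, which yields (\ref{EQN:negside}).

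There is no serious obstacle here: every step is an elementary moment computation made possible by the bias--variance decomposition $\bbE[(X-m)^2] = \var(X) + (\bbE[X] - m)^2$. The only point requiring care is keeping track of signs, namely which of the two inequalities corresponds to the upper-tail bound (in which the $\lambda$ inside the exponent appears as $+\lambda$) and which to the lower-tail bound (where it appears as $-\lambda$, effectively negating $u$).
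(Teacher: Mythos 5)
Your proof is correct and matches the paper's argument essentially verbatim: the first moment is computed exactly by linearity, and the second moment is bounded via the decomposition $\bbE[u^2] = \var(u) + \bbE[u]^2$ (your expansion of $X - m = (X-\mu) + (\mu - m)$ is the same identity) together with $\var(X) \leq c^2\mu^2$. You also correctly read the second display as having $1 - \bbE[u] + \bbE[u^2/2]$ on the left-hand side, which is the version actually needed to match Lemma~\ref{LEM:chernoff}; as printed with $+\bbE[u]$ it would fail at $m = (1-\epsilon)\mu$ when the variance bound is tight, so this is a sign typo in the statement that your $u \mapsto -u$ argument handles properly.
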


\begin{proof}
  From linearity of expectations, 
  \[
  \bbE[u] = \lambda\left(\frac{\mu}{m}-1\right) = \frac{\lambda \mu}{m} a(m).
  \]
  The  second moment is the sum of the variance plus the square of the first moment.  Using $\var(c_1 X + c_2) = c_1^2 \var(X)$, we get 
    \begin{align*}
        \bbE[u^2] &= \var(u) + \bbE[u]^2 \\
        &= \left(\frac{\lambda}{m}\right)^2\var(X)+ \left[\frac{\lambda \mu}{m} a(m) \right]^2.
    \end{align*}
    Since $\var(X) \leq c^2\mu^2$, 
    \begin{align*}
        \bbE[u^2] &\leq \left(\frac{\lambda \mu}{m}\right)^2 [c^2 + a(m)^2],
    \end{align*}
    giving the first result.  The proof of the second statement
    is similar.
\end{proof}

\begin{lemma}
\label{LEM:upperbound}
Let $p_\ell = \bbP(\Psi_{\lambda}((1-\epsilon)\mu) \leq 0)$ and $p_r = \bbP(\Psi_{\lambda}((1+\epsilon)\mu) \geq 0)$.  Then
  \begin{align*}
p_\ell &\leq \left[1 - \frac{\lambda \epsilon}{1-\epsilon} + \frac{1}{2}\left(\frac{\lambda}{1-\epsilon}\right)^2 (c^2 + \epsilon^2)\right]^n, \\
p_r &\leq \left[1 - \frac{\lambda \epsilon}{1+\epsilon} + \frac{1}{2}\left(\frac{\lambda}{1+\epsilon}\right)^2 (c^2 + \epsilon^2)\right]^n.
  \end{align*}
\end{lemma}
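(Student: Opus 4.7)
The plan is simply to apply the two preceding lemmas in sequence. Lemma~\ref{LEM:chernoff} already produces the two expressions
\[
p_r \leq \bigl[1 + \bbE[u_+] + (1/2)\bbE[u_+^2]\bigr]^n, \qquad p_\ell \leq \bigl[1 - \bbE[u_-] + (1/2)\bbE[u_-^2]\bigr]^n,
\]
where $u_\pm = \lambda(X/m - 1)$ is evaluated at $m = (1\pm\epsilon)\mu$ respectively. Lemma~\ref{LEM:bound} then provides matching upper bounds on the right-hand sides in terms of $\lambda\mu/m$ and $a(m) = 1 - m/\mu$, so the task reduces to a substitution.

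For $p_r$, I would take $m = (1+\epsilon)\mu$, giving $\mu/m = 1/(1+\epsilon)$ and $a(m) = -\epsilon$, so $a(m)^2 = \epsilon^2$. Plugging into equation~\eqref{EQN:plusside} turns the linear contribution into $(\lambda/(1+\epsilon))(-\epsilon) = -\lambda\epsilon/(1+\epsilon)$ and the quadratic contribution into $(1/2)(\lambda/(1+\epsilon))^2(c^2 + \epsilon^2)$. Raising the resulting expression to the $n$-th power gives the stated bound on $p_r$.

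For $p_\ell$, I would take $m = (1-\epsilon)\mu$, giving $\mu/m = 1/(1-\epsilon)$ and $a(m) = \epsilon$. Plugging into the companion inequality~\eqref{EQN:negside}, the sign flip in front of $(\lambda\mu/m)a(m)$ again produces $-\lambda\epsilon/(1-\epsilon)$ for the linear term, while the quadratic term becomes $(1/2)(\lambda/(1-\epsilon))^2(c^2 + \epsilon^2)$. Exponentiating yields the claim.

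Because the analytic work is already carried out in Lemmas~\ref{LEM:chernoff} and~\ref{LEM:bound}, the proof is essentially bookkeeping. The only pitfall is pairing the correct direction of Lemma~\ref{LEM:bound} with the correct tail: $p_r$ uses \eqref{EQN:plusside}, which comes from the standard Markov/MGF step, while $p_\ell$ uses \eqref{EQN:negside}, which arises from multiplying by $-\lambda$ in the Chernoff step, so the signs of $a(m)$ must be tracked carefully in each case.
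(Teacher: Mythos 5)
Your proposal is correct and takes essentially the same route as the paper's own proof, which likewise obtains both bounds by combining Lemma~\ref{LEM:chernoff} with Lemma~\ref{LEM:bound}, substituting $m=(1-\epsilon)\mu$ (so $a(m)=\epsilon$) for $p_\ell$ and $m=(1+\epsilon)\mu$ (so $a(m)=-\epsilon$) for $p_r$. Your pairing of \eqref{EQN:plusside} with $p_r$ and \eqref{EQN:negside} with $p_\ell$, tracking the sign of $a(m)$ in each case, is exactly the bookkeeping the paper performs.
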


\begin{proof}  Note 
\[
a((1-\epsilon)\mu) = 1-(1-\epsilon)\mu/\mu = \epsilon.
\]
      Combine Lemmas~\ref{LEM:chernoff},~\ref{LEM:bound}, and~\ref{LEM:min} with $m$ equal to $(1-\epsilon)\mu$ to get the first inequality.  Then set $m = (1+\epsilon)\mu$ and use $a((1+\epsilon)\mu) = -\epsilon$ to get the second.
\end{proof}

Our goal is to use as few samples as possible, which means simultaneously minimizing the quantities inside the brackets in Lemma~\ref{LEM:upperbound}.  
Both of the upper bounds are upward facing parabolas, and they have a unique minimum value.  
\begin{lemma}
\label{LEM:min}
  The minimum value of $f(\lambda) = 1 + a_1 \lambda + (1/2)a_2 \lambda^2$ is 
  \[
  1 - \frac{a_1^2}{2a_2},
  \]
  at $\lambda^* = -a_1/a_2.$
\end{lemma}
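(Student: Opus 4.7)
The plan is to treat this as an elementary optimization of a quadratic in $\lambda$. Since the statement of Lemma~\ref{LEM:upperbound} together with the preceding remark (``Both of the upper bounds are upward facing parabolas'') makes clear that we are in the regime $a_2>0$, the quadratic $f(\lambda)=1+a_1\lambda+(1/2)a_2\lambda^2$ has a unique global minimum, which we find by the standard differentiation argument.

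First I would compute $f'(\lambda)=a_1+a_2\lambda$ and solve $f'(\lambda)=0$ to obtain the unique critical point $\lambda^*=-a_1/a_2$. Next, to confirm this is a minimum rather than a maximum or saddle, I would invoke $f''(\lambda)=a_2>0$, so that $f$ is strictly convex and $\lambda^*$ is the unique global minimizer. Finally I would substitute $\lambda^*$ back into $f$:
\[
f(\lambda^*)=1+a_1\!\left(-\tfrac{a_1}{a_2}\right)+\tfrac{1}{2}a_2\!\left(\tfrac{a_1}{a_2}\right)^{\!2}=1-\tfrac{a_1^2}{a_2}+\tfrac{a_1^2}{2a_2}=1-\tfrac{a_1^2}{2a_2},
\]
which is exactly the claimed minimum value.

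An equally short alternative would be to complete the square, writing
\[
f(\lambda)=1-\tfrac{a_1^2}{2a_2}+\tfrac{a_2}{2}\!\left(\lambda+\tfrac{a_1}{a_2}\right)^{\!2},
\]
from which both the location and value of the minimum can be read off directly. There is no real obstacle here; the lemma is recorded purely as a computational reference so that the optimal scale factor $\lambda$ in the subsequent application of Lemma~\ref{LEM:upperbound} can be substituted without re‑deriving it each time. The only point worth flagging is that one should state (or assume implicitly from the ``upward facing parabola'' context) the hypothesis $a_2>0$, since otherwise the expression $-a_1/a_2$ would correspond to a maximum rather than a minimum.
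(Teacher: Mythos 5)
Your proposal is correct, and since you also give the completing-the-square form $f(\lambda)=1-\tfrac{a_1^2}{2a_2}+\tfrac{a_2}{2}\left(\lambda+\tfrac{a_1}{a_2}\right)^{2}$, it matches the paper's own (one-line) proof, which simply says ``Complete the square.'' Your remark that the hypothesis $a_2>0$ should be made explicit is a fair, minor point, but otherwise there is nothing to add.
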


\begin{proof} Complete the square. \end{proof}

However, because the coefficients in the quadratic upper bounds are different, it is not possible to simultaneously minimize these bounds with the same choice of $\lambda$.  

What can be said is that these bounds are quadratic with positive coefficient on $\lambda^2$, and so the best we can do is to choose a $\lambda$ such that the upper bounds are equal to one another.  This gives us our choice of $\lambda$.

\begin{lemma}
    Let  
    \[
    \lambda = \frac{\epsilon}{c^2+\epsilon^2}(1-\epsilon^2).
    \]
    Then 
    \begin{align*}
        \max\{p_{\ell},p_r\} \leq \left[1 - \frac{1}{2} \cdot \frac{\epsilon^2}{c^2+\epsilon^2}(1-\epsilon^2) \right]^n. 
    \end{align*}
\end{lemma}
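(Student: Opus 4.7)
The plan is to substitute the prescribed value of $\lambda$ directly into the two bounds supplied by Lemma~\ref{LEM:upperbound} and show they collapse to a common expression. Specifically, I would set
\[
\alpha = \frac{\lambda}{1-\epsilon}, \qquad \beta = \frac{\lambda}{1+\epsilon},
\]
so that
\[
p_\ell \leq \bigl[1 - \alpha\epsilon + \tfrac{1}{2}\alpha^2(c^2+\epsilon^2)\bigr]^n, \qquad p_r \leq \bigl[1 - \beta\epsilon + \tfrac{1}{2}\beta^2(c^2+\epsilon^2)\bigr]^n.
\]
The motivation for the chosen $\lambda$ (explained in the text immediately before the statement) is that the two quadratics in the brackets coincide. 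I would verify this in a short preliminary step: subtracting them and factoring out $(\beta-\alpha)$ shows they are equal precisely when $\alpha + \beta = 2\epsilon/(c^2+\epsilon^2)$, and since $\alpha+\beta = 2\lambda/(1-\epsilon^2)$, this forces $\lambda = \epsilon(1-\epsilon^2)/(c^2+\epsilon^2)$, matching the hypothesis.

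Next, with this $\lambda$, I would compute
\[
\alpha = \frac{\epsilon(1+\epsilon)}{c^2+\epsilon^2}, \qquad \beta = \frac{\epsilon(1-\epsilon)}{c^2+\epsilon^2},
\]
and plug $\alpha$ into the bracket for $p_\ell$. Factoring $\epsilon^2(1+\epsilon)/(c^2+\epsilon^2)$ gives
\[
1 + \frac{\epsilon^2(1+\epsilon)}{c^2+\epsilon^2}\Bigl[-1 + \tfrac{1+\epsilon}{2}\Bigr] = 1 - \frac{\epsilon^2(1+\epsilon)(1-\epsilon)}{2(c^2+\epsilon^2)} = 1 - \frac{1}{2}\cdot\frac{\epsilon^2(1-\epsilon^2)}{c^2+\epsilon^2}.
\]
The identical computation for $\beta$ (factoring $\epsilon^2(1-\epsilon)/(c^2+\epsilon^2)$) produces the same expression, confirming that the two bounds coincide.

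Combining these yields $\max\{p_\ell, p_r\} \leq \bigl[1 - \tfrac{1}{2}\cdot\tfrac{\epsilon^2(1-\epsilon^2)}{c^2+\epsilon^2}\bigr]^n$, as required. There is no real conceptual obstacle here; the work is entirely algebraic bookkeeping. The only thing worth being careful about is ensuring the $(1-\epsilon^2)$ and $(c^2+\epsilon^2)$ factors combine correctly, and noting that $1-\epsilon^2 > 0$ requires $\epsilon < 1$ — which is the usual regime of interest and is implicitly assumed throughout the paper.
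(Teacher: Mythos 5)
Your proposal is correct and follows the same route as the paper, whose proof simply states that the result follows directly from Lemma~\ref{LEM:upperbound} by substituting the given $\lambda$. Your algebra checks out — both bracketed quadratics reduce to $1 - \tfrac{1}{2}\epsilon^2(1-\epsilon^2)/(c^2+\epsilon^2)$ — and you have merely made explicit the computation (and the motivation for the choice of $\lambda$, which the paper gives in the preceding discussion) that the paper leaves to the reader.
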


\begin{proof}
   Follows directly from Lemma~\ref{LEM:upperbound}.
\end{proof}

This makes the inverse gap
\[
\epsilon^{-2}(c^2 + \epsilon^2)(1-\epsilon^2)^{-1}
= (c^2\epsilon^{-2} + 1)(1 - \epsilon^{-2})^{-1}
\]
and immediately gives Theorem~\ref{THM:main}.

\section{Computation}
\label{SEC:computation}

Set $\lambda = \epsilon(1-\epsilon^2)(c^2+\epsilon^2)^{-1}$.
Consider how to locate any root of $\Psi_\lambda$ for a given set of $X_1,\ldots,X_n$.  As before, we assume that the $X_i$ are nonnegative and not all identically zero.  The function $\Psi_\lambda(m)$ is continuous and decreasing, although not necessarily strictly decreasing.  Therefore, it might have a set of zeros that form a closed interval.

\subsection{An approximate method}

Suppose that the points $X_i$ are sorted into their order statistics,
\[
X_{(1)} \leq X_{(2)} \leq \cdots X_{(n)}.
\]
Then $\Psi_\lambda(X_{(1)}) > 0$ and $\Psi_\lambda(X_{(n)}) < 0$, so there exists some $i$ such that $\Psi_{\lambda}(X_{(i)}) \geq 0$ and $\Psi_{\lambda}(X_{i+1}) \leq 0$.  Since any particular value of $\Psi_\lambda$ requires $O(n)$ time to compute, this index $i$ can be found using binary search in $O(n\ln(n))$ time.

At this point we can switch from a discrete binary search over $\{1,\ldots,n\}$ to a continuous binary search over the interval $[X_{i},X_{i+1}]$.  This allows us to quickly find a root to any desired degree of accuracy.

This is the method that would most likely be used in practice.

\subsection{An exact method}

Although the approximation procedure is what would be used in practice because of its speed, there does exist a polynomial time exact method for this problem.
For $i < j$, and a value $m$, suppose that the points of $\{X_i\}$ that fall into the interval $[m - \lambda,m + \lambda]$ are exactly $\{X_{(i)},X_{(i+1)},\ldots,X_{(j)}\}.$  Then say that $m \in m(i,j)$.

That is, define the set $m(i,j)$ for $i < j$ as follows. 
\[
m(i,j) = \{m:X_{(k)} \in [m-\lambda,m+\lambda] \Leftrightarrow k \in \{i,i+1,\ldots,j\} \}.
\]
See Figure~\ref{FIG:interval}.

\begin{figure}[h!]
\begin{center}
\includegraphics[clip=true,trim=1.8in 8.7in 1.8in 1.7in,scale=0.7,page=2]{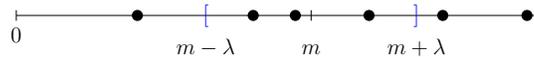}
\caption{Since the interval $[m-\lambda,m+\lambda]$ includes $X_{(2)}$, $X_{(3)}$, and $X_{(4)}$, $m \in m(2,4)$.}
\label{FIG:interval}
\end{center}
\end{figure}

Note that 
\[
(\forall m \in [X_{(1)},X_{(n)}])(\exists i < j)(m \in m(i,j)).
\]
There are at most $n$ choose 2 such $i < j$ where $m(i,j)$ is nonempty.  In fact, since as we slide $m$ from $X_{(1)}$ to $X_{(n)}$, each point can enter or leave the interval $[m-\lambda,m+\lambda]$ exactly once.  Therefore there are at most $2n$ pairs $(i,j)$ where $m(i,j)$ is nonempty.

That means that to find a root of $\Psi_\lambda$ we need merely check if there is a root $m \in m(i,j)$ for all $i < j$ such that $m(i,j) \neq \emptyset$.

For each $m \in m(i,j)$, the contribution of $X_{(1)},\ldots,X_{(i-1)}$ to $\Psi_\lambda(m)$ is $-(i-1)$, and the contribution of $X_{(j+1)},\ldots,X_{(n)}$ is $n - j$.  Hence $m \in m(i,j)$ is a zero of $\Psi_\lambda$ if an only if $m = 1/r$ and 
\[
n - j - (i - 1) + \sum_{k=i}^j (rX_{(k)} - 1) - (r X_{(k)} - 1)^3/6 = 0.
\]
This last equation is a cubic equation in $r$, and so the value of $r$ that satisfies it (assuming such exists) can be determined exactly using the cubic formula.

\section{Numerical Experiments}
\label{SEC:experiments}

The $M$-estimator presented here can be thought of as a principled interpolation between the sample mean and the sample median.  Because for $u$ small $d(u) \approx u$, as $\lambda \rightarrow 0$, the estimator converges to the sample mean.

At the other extreme, as $\lambda \rightarrow \infty$, all of the $d(\lambda(X_i/m-1)$ will evaluate to either 0, 1, or -1.  Hence the estimator converges to the sample median.

When $\lambda = \epsilon(1-\epsilon^2)/(c^2+\epsilon^2)$, we can precisely bound the chance of the relative error being greater than $\epsilon$.  However, the estimator can be used for any value of $\lambda$.

For instance, Table~\ref{TBL:1} records the result of using the estimator for 100 exponential random variables with mean 2 and median $2\ln(2) = 1.386\ldots$.

\begin{table}
\caption{Behavior of the sample average, sample mean, and $M$-estimator for 100 exponential random variables with mean 2. here.  Repeated five times to show variation.}\label{TBL:1}
%
%
\begin{tabular}{p{1.6cm}p{1.6cm}p{1.6cm}p{1.6cm}p{1.6cm}}
\hline\noalign{\smallskip}
Mean & Median & $\lambda = 0.1$ & $\lambda = 1$ & $\lambda = 5$ 
 \\
\noalign{\smallskip}\svhline\noalign{\smallskip}
2.34 & 1.86 & 2.33 & 1.93 & 1.87\tabularnewline
1.89 & 1.35 & 1.88 & 1.53 & 1.40\tabularnewline
2.29 & 1.78 & 2.28 & 1.83 & 1.77\tabularnewline
2.02 & 1.37 & 2.01 & 1.48 & 1.35\tabularnewline
2.17 & 1.37 & 2.16 & 1.70 & 1.39\tabularnewline
\noalign{\smallskip}\hline\noalign{\smallskip}
\end{tabular}
\end{table}

When $\lambda$ is small, the result is nearly identical to the sample mean.  As $\lambda$ increases, the result moves towards the median value.

Unlike the sample average, however, this $M$-estimator will always converge to a value, even when the mean does not exist.  Consider the following draws from the absolute value of a Cauchy distribution.  The mean of these random variables is infinite so the sample average will not converge.  The median of this distribution is 1.

\begin{table}
\caption{Behavior of the sample mean, sample median, and $M$-estimator for 100 draws from the absolute value of a Cauchy distribution.  Repeated five times to show variation.}\label{TBL:2}
%
%
\begin{tabular}{p{1.6cm}p{1.6cm}p{1.6cm}p{1.6cm}p{1.6cm}}
\hline\noalign{\smallskip}
Mean & Median & $\lambda = 0.1$ & $\lambda = 1$ & $\lambda = 5$ 
 \\
\noalign{\smallskip}\svhline\noalign{\smallskip}
2.30 & 0.70 & 1.95 & 0.88 & 0.69\tabularnewline
2.70 & 1.10 & 2.56 & 1.27 & 1.11\tabularnewline
10.29 & 1.01 & 2.96 & 1.24 & 1.02\tabularnewline
3.59 & 1.09 & 2.58 & 1.32 & 1.13\tabularnewline
8.07 & 1.25 & 4.83 & 1.68 & 1.34\tabularnewline
\noalign{\smallskip}\hline\noalign{\smallskip}
\end{tabular}
\end{table}

Even for values such as $\lambda = 1$, the result is fairly close to the median.  For any $\lambda > 0$, the $M$-estimator will not go to infinity as the sample average does, but instead to converge to a fixed value as the number of samples goes to infinity.

\subsection{Timings}

To test the time required to create the new estimates, the algorithm presented here together with the algorithm from~\cite{hubertoappearb} were implemented in R.  Table~\ref{TBL:3} shows the results of running both algorithms using an Euler-Maruyama simulation of a simple SDE as a test case.

\begin{table}
\caption{Behavior of the sample mean, sample median, and $M$-estimator for 100 draws from the absolute value of a Cauchy distribution.  Repeated five times to show variation.}\label{TBL:3}
%
%
\begin{tabular}{p{1.6cm}p{1.6cm}p{1.6cm}p{1.6cm}p{1.6cm}}
\hline\noalign{\smallskip}
epsilon & delta & CG & New & relative change\tabularnewline
\noalign{\smallskip}\svhline\noalign{\smallskip}
0.10 & 1e-06 & 551.94 & 461.48 & -0.1638946\tabularnewline
0.05 & 1e-06 & 2013.24 & 1822.40 & -0.0947925\tabularnewline
\noalign{\smallskip}\hline\noalign{\smallskip}
\end{tabular}
\end{table}

As can be seen from the data, the relative change is near $-2\epsilon$, and the relative change gets closer to $-2\epsilon$ the smaller $\epsilon$ becomes.

\section{Conclusion}

The modified Catoni $M$-estimator presented here gives a means of interpolating between the sample mean and the sample average.  The estimator is designed for the output of Monte Carlo simulations where the distribution is usually unknown, but often it is possible to compute a bound on the relative standard deviation.  It is fast to calculate in practice and can be computed exactly in terms of square and cube roots in polynomial time.  The estimator has a parameter $\lambda$ which controls how close the estimate is to the sample mean or sample median.

Given a known upper bound $c$ on the relative standard deviation of the output, $\lambda$ can be chosen as $\epsilon(1-\epsilon^2)/(c^2+\epsilon^2)$ to yield an $(\epsilon,\delta)$-randomized approximation scheme that uses a number of samples (to first order) equal to that if the data was normally distributed.  Even if $c$ is unknown (or infinite), the estimator will still converge to a fixed measure of centrality for any choice of $\lambda$.

\appendix

\section{Code}

The following code was used to create the tables in Section 4.

The function \texttt{psi\_cg} implements the method from Catoni \&
Guilini {[}2{]}:

\begin{Shaded}
\begin{Highlighting}[]
\NormalTok{psi_cg <-}\StringTok{ }\NormalTok{function(m, x, lambda) \{}
  \CommentTok{# Calculates the psi function for a piece of data}
  \NormalTok{u <-}\StringTok{ }\NormalTok{lambda *}\StringTok{ }\NormalTok{(x /}\StringTok{ }\NormalTok{m -}\StringTok{ }\DecValTok{1}\NormalTok{)}
  \NormalTok{a1 <-}\StringTok{ }\DecValTok{2} \NormalTok{*}\StringTok{ }\KeywordTok{sqrt}\NormalTok{(}\DecValTok{2}\NormalTok{) /}\StringTok{ }\DecValTok{3}
  \NormalTok{a2 <-}\StringTok{ }\KeywordTok{sqrt}\NormalTok{(}\DecValTok{2}\NormalTok{)}
  \NormalTok{d <-}\StringTok{ }\NormalTok{a1 *}\StringTok{ }\NormalTok{(u >}\StringTok{ }\NormalTok{a2) -}\StringTok{ }\NormalTok{a2 *}\StringTok{ }\NormalTok{(u <}\StringTok{ }\NormalTok{-a2) +}\StringTok{ }\NormalTok{(u -}\StringTok{ }\NormalTok{u^}\DecValTok{3} \NormalTok{/}\StringTok{ }\DecValTok{6}\NormalTok{) *}\StringTok{ }\NormalTok{(u >=}\StringTok{ }\NormalTok{-}\DecValTok{1}\NormalTok{) *}\StringTok{ }\NormalTok{(u <=}\StringTok{ }\DecValTok{1}\NormalTok{)}
  \KeywordTok{return}\NormalTok{(}\KeywordTok{sum}\NormalTok{(d))}
\NormalTok{\}}
\end{Highlighting}
\end{Shaded}

\begin{Shaded}
\begin{Highlighting}[]
\NormalTok{psi_h <-}\StringTok{ }\NormalTok{function(m, x, lambda) \{}
  \CommentTok{# Calculates the psi function for a piece of data}
  \NormalTok{u <-}\StringTok{ }\NormalTok{lambda *}\StringTok{ }\NormalTok{(x /}\StringTok{ }\NormalTok{m -}\StringTok{ }\DecValTok{1}\NormalTok{)}
  \NormalTok{d <-}\StringTok{ }\NormalTok{(}\DecValTok{5} \NormalTok{/}\StringTok{ }\DecValTok{6}\NormalTok{) *}\StringTok{ }\NormalTok{(u >}\StringTok{ }\DecValTok{1}\NormalTok{) -}\StringTok{ }\NormalTok{(}\DecValTok{5} \NormalTok{/}\StringTok{ }\DecValTok{6}\NormalTok{) *}\StringTok{ }\NormalTok{(u <}\StringTok{ }\NormalTok{-}\DecValTok{1}\NormalTok{) +}\StringTok{ }\NormalTok{(u -}\StringTok{ }\NormalTok{u^}\DecValTok{3} \NormalTok{/}\StringTok{ }\DecValTok{6}\NormalTok{) *}\StringTok{ }\NormalTok{(u >=}\StringTok{ }\NormalTok{-}\DecValTok{1}\NormalTok{) *}\StringTok{ }\NormalTok{(u <=}\StringTok{ }\DecValTok{1}\NormalTok{)}
  \KeywordTok{return}\NormalTok{(}\KeywordTok{sum}\NormalTok{(d))}
\NormalTok{\}}
\end{Highlighting}
\end{Shaded}

\begin{Shaded}
\begin{Highlighting}[]
\NormalTok{m <-}\StringTok{ }\NormalTok{function(x, lambda, }\DataTypeTok{estimate =} \NormalTok{psi_h, }\DataTypeTok{tol =} \DecValTok{10}\NormalTok{^(-}\DecValTok{12}\NormalTok{)) \{}
  \NormalTok{x <-}\StringTok{ }\KeywordTok{sort}\NormalTok{(x) }\CommentTok{# Put everything in order}
  \NormalTok{a <-}\StringTok{ }\NormalTok{x[}\DecValTok{1}\NormalTok{]}
  \NormalTok{b <-}\StringTok{ }\NormalTok{x[}\KeywordTok{length}\NormalTok{(x)]}
  \NormalTok{while ((b -}\StringTok{ }\NormalTok{a) >}\StringTok{ }\NormalTok{tol) \{}
    \NormalTok{c <-}\StringTok{ }\NormalTok{(a +}\StringTok{ }\NormalTok{b) /}\StringTok{ }\DecValTok{2}
    \NormalTok{psi.c <-}\StringTok{ }\KeywordTok{estimate}\NormalTok{(c, x, lambda)}
    \NormalTok{if (psi.c <}\StringTok{ }\DecValTok{0}\NormalTok{)}
      \NormalTok{b <-}\StringTok{ }\NormalTok{c}
    \NormalTok{else}
      \NormalTok{a <-}\StringTok{ }\NormalTok{c}
  \NormalTok{\}}
  \KeywordTok{return}\NormalTok{((a +}\StringTok{ }\NormalTok{b) /}\StringTok{ }\DecValTok{2}\NormalTok{)}
\NormalTok{\}}
\end{Highlighting}
\end{Shaded}

\subsection{Numerical Experiments}\label{numerical-experiments}

Now we are ready to undertake the numerical experiments.

\begin{Shaded}
\begin{Highlighting}[]
\CommentTok{# install.packages("tidyverse")}
\KeywordTok{library}\NormalTok{(tidyverse)}
\end{Highlighting}
\end{Shaded}

Create Table 1 (accuracy new method):

\begin{Shaded}
\begin{Highlighting}[]
\CommentTok{# Generate data to work with}
\KeywordTok{set.seed}\NormalTok{(}\DecValTok{123456}\NormalTok{)}
\NormalTok{v <-}\StringTok{ }\KeywordTok{rexp}\NormalTok{(}\DecValTok{100} \NormalTok{*}\StringTok{ }\DecValTok{5}\NormalTok{, }\DataTypeTok{rate =} \DecValTok{1} \NormalTok{/}\StringTok{ }\DecValTok{2}\NormalTok{)}
\NormalTok{r <-}\StringTok{ }\KeywordTok{c}\NormalTok{(}\KeywordTok{rep}\NormalTok{(}\DecValTok{1}\NormalTok{, }\DecValTok{100}\NormalTok{), }\KeywordTok{rep}\NormalTok{(}\DecValTok{2}\NormalTok{, }\DecValTok{100}\NormalTok{), }\KeywordTok{rep}\NormalTok{(}\DecValTok{3}\NormalTok{, }\DecValTok{100}\NormalTok{), }\KeywordTok{rep}\NormalTok{(}\DecValTok{4}\NormalTok{, }\DecValTok{100}\NormalTok{), }\KeywordTok{rep}\NormalTok{(}\DecValTok{5}\NormalTok{, }\DecValTok{100}\NormalTok{))}
\NormalTok{df <-}\StringTok{ }\KeywordTok{tibble}\NormalTok{(}\DataTypeTok{run =} \NormalTok{r, }\DataTypeTok{rv =} \NormalTok{v)}
\NormalTok{df_sum <-}\StringTok{ }\NormalTok{df 
\StringTok{  }\KeywordTok{group_by}\NormalTok{(run) 
\StringTok{  }\KeywordTok{summarize}\NormalTok{(}
    \DataTypeTok{Mean =} \KeywordTok{mean}\NormalTok{(rv), }
    \DataTypeTok{Median =} \KeywordTok{median}\NormalTok{(rv),}
    \StringTok{"$}\CharTok{\textbackslash{}\textbackslash{}}\StringTok{lambda$ = 0.1"} \NormalTok{=}\StringTok{ }\KeywordTok{m}\NormalTok{(rv, }\DataTypeTok{lambda =} \FloatTok{0.1}\NormalTok{),}
    \StringTok{"$}\CharTok{\textbackslash{}\textbackslash{}}\StringTok{lambda$ = 1"} \NormalTok{=}\StringTok{ }\KeywordTok{m}\NormalTok{(rv, }\DataTypeTok{lambda =} \DecValTok{1}\NormalTok{),}
    \StringTok{"$}\CharTok{\textbackslash{}\textbackslash{}}\StringTok{lambda$ = 5"} \NormalTok{=}\StringTok{ }\KeywordTok{m}\NormalTok{(rv, }\DataTypeTok{lambda =} \DecValTok{5}\NormalTok{),}
  \NormalTok{) 
\StringTok{  }\KeywordTok{select}\NormalTok{(-run)}
\KeywordTok{kable}\NormalTok{(df_sum, }\DataTypeTok{digits =} \DecValTok{2}\NormalTok{)}
\end{Highlighting}
\end{Shaded}

\begin{longtable}[]{@{}rrrrr@{}}
\toprule
Mean & Median & \(\lambda\) = 0.1 & \(\lambda\) = 1 & \(\lambda\) =
5\tabularnewline
\midrule
\endhead
2.34 & 1.86 & 2.33 & 1.93 & 1.87\tabularnewline
1.89 & 1.35 & 1.88 & 1.53 & 1.40\tabularnewline
2.29 & 1.78 & 2.28 & 1.83 & 1.77\tabularnewline
2.02 & 1.37 & 2.01 & 1.48 & 1.35\tabularnewline
2.17 & 1.37 & 2.16 & 1.70 & 1.39\tabularnewline
\bottomrule
\end{longtable}

Create Table 2 (accuracy new method):

\begin{Shaded}
\begin{Highlighting}[]
\CommentTok{# Generate data to work with}
\KeywordTok{set.seed}\NormalTok{(}\DecValTok{123456}\NormalTok{)}
\NormalTok{v <-}\StringTok{ }\KeywordTok{abs}\NormalTok{(}\KeywordTok{rcauchy}\NormalTok{(}\DecValTok{100} \NormalTok{*}\StringTok{ }\DecValTok{5}\NormalTok{))}
\NormalTok{r <-}\StringTok{ }\KeywordTok{c}\NormalTok{(}\KeywordTok{rep}\NormalTok{(}\DecValTok{1}\NormalTok{, }\DecValTok{100}\NormalTok{), }\KeywordTok{rep}\NormalTok{(}\DecValTok{2}\NormalTok{, }\DecValTok{100}\NormalTok{), }\KeywordTok{rep}\NormalTok{(}\DecValTok{3}\NormalTok{, }\DecValTok{100}\NormalTok{), }\KeywordTok{rep}\NormalTok{(}\DecValTok{4}\NormalTok{, }\DecValTok{100}\NormalTok{), }\KeywordTok{rep}\NormalTok{(}\DecValTok{5}\NormalTok{, }\DecValTok{100}\NormalTok{))}
\NormalTok{df <-}\StringTok{ }\KeywordTok{tibble}\NormalTok{(}\DataTypeTok{run =} \NormalTok{r, }\DataTypeTok{rv =} \NormalTok{v)}
\NormalTok{df_sum <-}\StringTok{ }\NormalTok{df 
\StringTok{  }\KeywordTok{group_by}\NormalTok{(run) 
\StringTok{  }\KeywordTok{summarize}\NormalTok{(}
    \DataTypeTok{Mean =} \KeywordTok{mean}\NormalTok{(rv), }
    \DataTypeTok{Median =} \KeywordTok{median}\NormalTok{(rv),}
    \StringTok{"$}\CharTok{\textbackslash{}\textbackslash{}}\StringTok{lambda$ = 0.1"} \NormalTok{=}\StringTok{ }\KeywordTok{m}\NormalTok{(rv, }\DataTypeTok{lambda =} \FloatTok{0.1}\NormalTok{),}
    \StringTok{"$}\CharTok{\textbackslash{}\textbackslash{}}\StringTok{lambda$ = 1"} \NormalTok{=}\StringTok{ }\KeywordTok{m}\NormalTok{(rv, }\DataTypeTok{lambda =} \DecValTok{1}\NormalTok{),}
    \StringTok{"$}\CharTok{\textbackslash{}\textbackslash{}}\StringTok{lambda$ = 5"} \NormalTok{=}\StringTok{ }\KeywordTok{m}\NormalTok{(rv, }\DataTypeTok{lambda =} \DecValTok{5}\NormalTok{),}
  \NormalTok{) 
\StringTok{  }\KeywordTok{select}\NormalTok{(-run)}
\KeywordTok{kable}\NormalTok{(df_sum, }\DataTypeTok{digits =} \DecValTok{2}\NormalTok{)}
\end{Highlighting}
\end{Shaded}

\begin{longtable}[]{@{}rrrrr@{}}
\toprule
Mean & Median & \(\lambda\) = 0.1 & \(\lambda\) = 1 & \(\lambda\) =
5\tabularnewline
\midrule
\endhead
2.30 & 0.70 & 1.95 & 0.88 & 0.69\tabularnewline
2.70 & 1.10 & 2.56 & 1.27 & 1.11\tabularnewline
10.29 & 1.01 & 2.96 & 1.24 & 1.02\tabularnewline
3.59 & 1.09 & 2.58 & 1.32 & 1.13\tabularnewline
8.07 & 1.25 & 4.83 & 1.68 & 1.34\tabularnewline
\bottomrule
\end{longtable}

Toy example of process where we know ratio of the variance to square of
mean, Brownian motion created through an inefficient process.

\begin{Shaded}
\begin{Highlighting}[]
\CommentTok{# Euler-Maruyama for SDE:  dS_t = dt + dW_t}
\NormalTok{bm <-}\StringTok{ }\NormalTok{function(t, h) \{}
  \NormalTok{z <-}\StringTok{ }\KeywordTok{rnorm}\NormalTok{(t /}\StringTok{ }\NormalTok{h)}
  \NormalTok{s <-}\StringTok{ }\DecValTok{0}
  \NormalTok{for (i in }\DecValTok{1}\NormalTok{:}\KeywordTok{length}\NormalTok{(z))}
    \NormalTok{s <-}\StringTok{ }\NormalTok{s +}\StringTok{ }\NormalTok{h *}\StringTok{ }\DecValTok{1} \NormalTok{+}\StringTok{ }\KeywordTok{sqrt}\NormalTok{(h) *}\StringTok{ }\NormalTok{z[i]}
  \KeywordTok{return}\NormalTok{(s)}
\NormalTok{\}}
\end{Highlighting}
\end{Shaded}

Time test for old:

\begin{Shaded}
\begin{Highlighting}[]
\NormalTok{time_psi_cg <-}\StringTok{ }\NormalTok{function(epsilon, delta, c) \{}
  \NormalTok{start_time <-}\StringTok{ }\KeywordTok{as.numeric}\NormalTok{(}\KeywordTok{as.numeric}\NormalTok{(}\KeywordTok{Sys.time}\NormalTok{())*}\DecValTok{1000}\NormalTok{, }\DataTypeTok{digits=}\DecValTok{15}\NormalTok{)}
  \NormalTok{n <-}\StringTok{ }\KeywordTok{ceiling}\NormalTok{(}\DecValTok{2} \NormalTok{*}\StringTok{ }\NormalTok{c^}\DecValTok{2} \NormalTok{*}\StringTok{ }\NormalTok{epsilon^(-}\DecValTok{2}\NormalTok{) *}\StringTok{ }\KeywordTok{log}\NormalTok{(}\DecValTok{2} \NormalTok{/}\StringTok{ }\NormalTok{delta) *}\StringTok{ }\NormalTok{(}\DecValTok{1} \NormalTok{+}\StringTok{ }\NormalTok{epsilon)^}\DecValTok{2}\NormalTok{)}
  \NormalTok{lambda <-}\StringTok{ }\NormalTok{epsilon /}\StringTok{ }\NormalTok{(}\DecValTok{1} \NormalTok{+}\StringTok{ }\NormalTok{epsilon) /}\StringTok{ }\NormalTok{c^}\DecValTok{2}
  \NormalTok{r <-}\StringTok{ }\KeywordTok{replicate}\NormalTok{(n, }\KeywordTok{bm}\NormalTok{(}\DecValTok{1}\NormalTok{, }\FloatTok{0.001}\NormalTok{))}
  \NormalTok{hatm <-}\StringTok{ }\KeywordTok{m}\NormalTok{(r, lambda, psi_cg) }
  \NormalTok{end_time <-}\StringTok{ }\KeywordTok{as.numeric}\NormalTok{(}\KeywordTok{as.numeric}\NormalTok{(}\KeywordTok{Sys.time}\NormalTok{())*}\DecValTok{1000}\NormalTok{, }\DataTypeTok{digits=}\DecValTok{15}\NormalTok{)}
  \KeywordTok{return}\NormalTok{(end_time -}\StringTok{ }\NormalTok{start_time)}
\NormalTok{\}}
\end{Highlighting}
\end{Shaded}

Time test for new:

\begin{Shaded}
\begin{Highlighting}[]
\NormalTok{time_psi_h <-}\StringTok{ }\NormalTok{function(epsilon, delta, c) \{}
  \NormalTok{start_time <-}\StringTok{ }\KeywordTok{as.numeric}\NormalTok{(}\KeywordTok{as.numeric}\NormalTok{(}\KeywordTok{Sys.time}\NormalTok{())*}\DecValTok{1000}\NormalTok{, }\DataTypeTok{digits=}\DecValTok{15}\NormalTok{)}
  \NormalTok{n <-}\StringTok{ }\KeywordTok{ceiling}\NormalTok{(}\DecValTok{2} \NormalTok{*}\StringTok{ }\NormalTok{(c^}\DecValTok{2} \NormalTok{*}\StringTok{ }\NormalTok{epsilon^(-}\DecValTok{2}\NormalTok{) +}\StringTok{ }\DecValTok{1}\NormalTok{) *}\StringTok{ }\NormalTok{(}\DecValTok{1} \NormalTok{+}\StringTok{ }\NormalTok{epsilon^}\DecValTok{2}\NormalTok{)^(-}\DecValTok{1}\NormalTok{) *}\StringTok{ }\KeywordTok{log}\NormalTok{(}\DecValTok{2} \NormalTok{/}\StringTok{ }\NormalTok{delta))}
  \NormalTok{lambda <-}\StringTok{ }\NormalTok{epsilon *}\StringTok{ }\NormalTok{(}\DecValTok{1} \NormalTok{-}\StringTok{ }\NormalTok{epsilon^}\DecValTok{2}\NormalTok{) /}\StringTok{ }\NormalTok{(c^}\DecValTok{2} \NormalTok{+}\StringTok{ }\NormalTok{epsilon^}\DecValTok{2}\NormalTok{)}
  \NormalTok{r <-}\StringTok{ }\KeywordTok{replicate}\NormalTok{(n, }\KeywordTok{bm}\NormalTok{(}\DecValTok{1}\NormalTok{, }\FloatTok{0.001}\NormalTok{))}
  \NormalTok{hatm <-}\StringTok{ }\KeywordTok{m}\NormalTok{(r, lambda, psi_h) }
  \NormalTok{end_time <-}\StringTok{ }\KeywordTok{as.numeric}\NormalTok{(}\KeywordTok{as.numeric}\NormalTok{(}\KeywordTok{Sys.time}\NormalTok{())*}\DecValTok{1000}\NormalTok{, }\DataTypeTok{digits=}\DecValTok{15}\NormalTok{)}
  \KeywordTok{return}\NormalTok{(end_time -}\StringTok{ }\NormalTok{start_time)}
\NormalTok{\}}
\end{Highlighting}
\end{Shaded}

Timings using an Intel(R) Core(TM) i7-6700 CPT @ 3.40GHz.

\begin{Shaded}
\begin{Highlighting}[]
\KeywordTok{set.seed}\NormalTok{(}\DecValTok{123456}\NormalTok{)}
\NormalTok{ep <-}\StringTok{ }\KeywordTok{c}\NormalTok{(}\FloatTok{0.1}\NormalTok{, }\FloatTok{0.05}\NormalTok{)}
\NormalTok{del <-}\StringTok{ }\KeywordTok{c}\NormalTok{(}\DecValTok{10}\NormalTok{^(-}\DecValTok{6}\NormalTok{), }\DecValTok{10}\NormalTok{^(-}\DecValTok{6}\NormalTok{))}
\NormalTok{r <-}\StringTok{ }\KeywordTok{rep}\NormalTok{(}\DecValTok{0}\NormalTok{, }\DecValTok{2}\NormalTok{)}
\NormalTok{s <-}\StringTok{ }\KeywordTok{rep}\NormalTok{(}\DecValTok{0}\NormalTok{, }\DecValTok{2}\NormalTok{)}
\NormalTok{for (i in }\DecValTok{1}\NormalTok{:}\DecValTok{2}\NormalTok{) \{}
  \NormalTok{r[i] <-}\StringTok{ }\KeywordTok{mean}\NormalTok{(}\KeywordTok{replicate}\NormalTok{(}\DecValTok{100}\NormalTok{, }\KeywordTok{time_psi_cg}\NormalTok{(ep[i], del[i], }\DecValTok{1}\NormalTok{)))}
  \NormalTok{s[i] <-}\StringTok{ }\KeywordTok{mean}\NormalTok{(}\KeywordTok{replicate}\NormalTok{(}\DecValTok{100}\NormalTok{, }\KeywordTok{time_psi_h}\NormalTok{(ep[i], del[i], }\DecValTok{1}\NormalTok{)))}
\NormalTok{\}}
\NormalTok{df <-}\StringTok{ }\KeywordTok{tibble}\NormalTok{(}\DataTypeTok{epsilon =} \NormalTok{ep, }\DataTypeTok{delta =} \NormalTok{del, }\DataTypeTok{CG =} \NormalTok{r, }\DataTypeTok{New =} \NormalTok{s) 
\StringTok{  }\KeywordTok{mutate}\NormalTok{(}\DataTypeTok{improvement =} \NormalTok{(s -}\StringTok{ }\NormalTok{r) /}\StringTok{ }\NormalTok{r)}
\KeywordTok{kable}\NormalTok{(df)}
\end{Highlighting}
\end{Shaded}

\begin{longtable}[]{@{}rrrrr@{}}
\toprule
epsilon & delta & CG & New & improvement\tabularnewline
\midrule
\endhead
0.10 & 1e-06 & 551.94 & 461.48 & -0.1638946\tabularnewline
0.05 & 1e-06 & 2013.24 & 1822.40 & -0.0947925\tabularnewline
\bottomrule
\end{longtable}

\begin{acknowledgement}
This work supported by National Science Foundation grant DMS-1418495.
\end{acknowledgement}

%
%

\end{document}